\newtheorem{theorem}{Theorem}
\address{%
$^{1}$ Amirkabir University of Technology (Tehran Polytechnic), 424 Hafez Ave, Tehran, Iran\\
$^{2}$ Shahid Beheshti University, Evin, Tehran, Iran\\
$^{3}$ Bioinformatics Group, School of Computer Science, Institute for Research in Fundamental Sciences (IPM), Niavaran, Tehran, Iran}
\abstract{Rooted triplets are becoming one of the important types of input for reconstructing rooted phylogenies. A rooted triplet is a phylogenetic tree on three leaves and shows the evolutionary relationship of the corresponding three species. In this paper, we investigate the problem of inferring the maximum consensus evolutionary tree from a set of rooted triplets. The mentioned problem is known to be APX-hard. We present two new heuristic algorithms. For a given set of \emph{m} triplets on \emph{n} species, the \emph{FastTree} algorithm runs in $O(m + \alpha(n)n^{2})$ time, where $\alpha(n)$ is functional inverse of Ackermann's function. This is faster than any other previously known algorithms, although, the outcome is less satisfactory. The \emph{BPMTR} algorithm runs in $O(mn^{3})$ time and in average performs better than any other previously known algorithms for this problem.}
\begin{document}

\section{Introduction}
\label{sec:1}

After publication of Charles Darwin's book \emph{On the origin of species; By means of natural selection}, the theory of evolution was widely accepted. Since then remarkable developments in evolutionary studies brought the scientists to the Phylogenetics, a field that studies the biological or the morphological data of species to output a mathematical model such as a tree or a network representing the evolutionary interrelationship of species and the process of their evolution. Besides, Phylogenetics is not only limited to the biology but may also arise anywhere that the concept of evolution appears. For example, a recent study in evolutionary linguistic employs phylogeny inference to clarify the origin of Indo-European language family\cite{Bouckaert2012}.
Several approaches have been introduced to infer evolutionary relationships \cite{Felsenstein2004}. Amongst those, well known approaches are character based methods (e.g., Maximum Parsimony), distance based methods (e.g., Neighbor Joining and UPGMA) and quartet based methods (e.g., QNet). Recently, rather new approaches namely triplet based methods have been introduced. Triplet based methods output rooted trees and networks due to the rooted nature of triplets. A rooted triplet is a rooted unordered leaf labeled binary tree on three leaves and shows the evolutionary relationship of the corresponding three species. Triplets can be obtained accurately using a maximum likelihood method such as the one introduced by \citet{Chor2001} or Sibley-Ahlquist-style DNA-DNA hybridization experiments \cite{Kannan1996}. Indeed, we expect highly accurate results from triplet based methods. However, sometimes due to experimental errors or some biological events such as hybridization (recombination) or horizontal gene transfer it is not possible to reconstruct a tree that satisfies all of the input constraints (triplets). There are two approaches to overcome this problem. The first approach is to employ a more complex model such as network which is the proper approach when the mentioned biological events have actually happened. The second approach tries to reconstruct a tree satisfying as many input triplets as possible. This approach is more useful when the input data contains error. The latter approach forms the subject of this paper. In the next section we will provide necessary definitions and notations. Section 3 contains an overview of previous results. We will present our algorithms and experimental results in section 4. Finally, in section 5 open problems and ideas for further improvements are discussed.
\section{Preliminaries}
\label{sec:2}

An \emph{evolutionary tree (phylogenetic tree)} on a set \emph{S} of \emph{n} species, \begin{math}|S|=n\end{math}, is a binary, rooted\footnote{More precisely speaking, an evolutionary tree can also be unrooted, however triplet based methods output rooted phylogenies.}, unordered tree in which leaves are distinctly labeled by members of \emph{S} (see Fig.~\ref{fig:1a}). A \emph{rooted triplet} is a phylogenetic tree on three leaves. The unique triplet on leaves {\emph{x}, \emph{y}, \emph{z}} is denoted by (({\emph{x}, \emph{y}}), \emph{z}) or \emph{xy}\textbar\emph{z}, if the lowest common ancestor of \emph{x} and \emph{y} is a proper descendant of the lowest common ancestor of \emph{x} and \emph{z}, or equivalently the lowest common ancestor of \emph{x} and \emph{y} is a proper descendant of lowest common ancestor of \emph{y} and \emph{z} (see Fig.~\ref{fig:1b}). A triplet \emph{t} (e.g., \emph{xy}\textbar\emph{z}) is \emph{consistent} with a tree \emph{T}  (or equivalently \emph{T} is consistent with \emph{t}) if \emph{t} is an embedded subtree of \emph{T}. It means \emph{t} can be obtained from \emph{T} by a series of edge contractions (i.e., if in \emph{T} the lowest common ancestor of \emph{x} and \emph{y} is a proper descendant of the lowest common ancestor of \emph{x} and \emph{z}). We also say \emph{T} \emph{satisfies} \emph{t}, if \emph{T} is consistent with \emph{t}. The tree in Fig.~\ref{fig:1a} is consistent with the triplet in Fig.~\ref{fig:1b}. A phylogenetic tree \emph{T} is consistent with a set of rooted triplets if it is consistent with every triplet in the set. We call two leaves \emph{siblings} or \emph{cherry} if they share the same parent. For example, \{\emph{x}, \emph{y}\} in Fig.~\ref{fig:1a} form a cherry.
\begin{figure}
  \caption{Example of a phylogenetic tree and a consistent triplet}
  \label{fig:1}
  \begin{center}
    \subfloat[A phylogenetic tree]
    {
        \label{fig:1a}
        \includegraphics[scale=0.75]{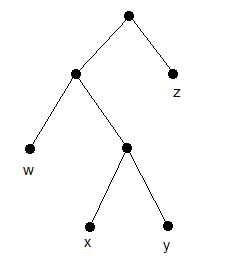}
    }
    \subfloat[The triplet {\emph{x}\emph{y}\textbar\emph{z}}]
    {
        \label{fig:1b}
        \includegraphics[scale=0.75]{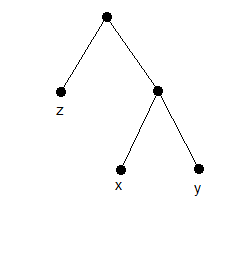}
    }
  \end{center}
\end{figure}

A set of triplets \emph{R} is called \emph{dense} if for each set of three species \{\emph{x}, \emph{y}, \emph{z}\}, \emph{R} contains at least one of three possible triplets \emph{xy}\textbar\emph{z}, \emph{xz}\textbar\emph{y} or \emph{yz}\textbar\emph{x}. If \emph{R} contains exactly one triplet for each set of three species, it is called \emph{minimal dense}, and if it contains every possible triplet it is called \emph{maximal dense}. Now we can define the problem of reconstructing an evolutionary tree from a set of rooted triplets. Suppose \emph{S} is a finite set of species of cardinality \emph{n} and \emph{R} is a finite set of rooted triplets of cardinality \emph{m} on \emph{S}. The problem is to find an evolutionary tree leaf-labeled by members of \emph{S} which is consistent with the maximum number of rooted triplets in \emph{R}. This problem is called \emph{Maximum Rooted Triplets Consistency (MaxRTC)} problem \cite{Byrka2010} or \emph{Maximum Inferred Local Consensus Tree (MILCT)} problem \cite{Jansson2001}. This problem is NP-hard (see section ~\ref{sec:3}) which means no polynomial-time algorithm can be found to solves the problem optimally unless P=NP. For this problem and similar problems, one might search for polynomial-time algorithms that produce approximate solutions. We call an algorithm an \emph{approximation algorithm} if its solution is guaranteed to be within some factor of optimum solution. In contrast, \emph{heuristics} may produce good solutions but do not come with a guarantee on their quality of solution. An algorithm for a maximization problem is called an \begin{math}\alpha-approximation\end{math} algorithm, for some \begin{math}\alpha>1\end{math}, if for any input the output of algorithm be at most \begin{math}\alpha\end{math} times worse than the optimum solution. The factor \begin{math}\alpha\end{math} is called \emph{approximation factor} or \emph{approximation ratio}.
\section{Related works}
\label{sec:3}

\citet{Aho1981} investigated the problem of constructing a tree consistent with a set of rooted triplets for the first time. They designed a simple recursive algorithm which runs in \begin{math}O(mn)\end{math} time and returns a tree consistent with all of the given triplets if at least one tree exists. Otherwise, it returns null. Later \citet{Henzinger1999} improved Aho et al.'s algorithm to run in \begin{math} min\{O(n + mn^{1/2}) , O(m + n^2 log n)\}\end{math} time. The time complexity of this algorithm further improved to \begin{math}min\{O(n + m log^2 n), O(m + n^2 log n)\}\end{math} by \citet{Jansson2005} using more recent data structures introduced by \citet{Holm2001}. MaxRTC is proved to be NP-hard \cite{Jansson2001, Wu2004, Bryant1997}. \citet{Byrka2008} reported that this proof is an L-reduction from an APX-hard problem meaning that the problem is APX-hard in general (non-dense case). Later, \citet{vanIersel2007} proved that MaxRTC is NP-hard even if the input triplet set is dense.

Several heuristics and approximation algorithms have been presented for the so called MaxRTC problem each of which performs better in practice on different input triplet sets. \citet{Gasieniec1999} proposed two algorithms by modifying Aho et al.'s algorithm. Their first algorithm which is referred as \texttt{One-Leaf-Split} \cite{Byrka2010} runs in \begin{math}O((m + n)log n)\end{math} time and the second one which is referred as \texttt{Min-Cut-Split} \cite{Byrka2010} runs in \begin{math}min\{O(mn^2 + n^3 log n), O(n^4)\}\end{math} time. The tree generated by the first algorithm is guaranteed to be consistent with at least one third of the input triplet set. This gives a lower bound for the problem. In another study, \citet{Wu2004} introduced a bottom up heuristic approach called \textbf{BPMF}\footnote{Best Pair Merge First} which runs in \begin{math}O(mn^3)\end{math} time. In the same study he proposed an exact exponential algorithm for the problem which runs in \begin{math}O((m + n^2) 3^n)\end{math} time and \begin{math}O(2^n)\end{math} space. According to the results of \citet{Wu2004} BPMF seems to perform well in average on randomly generated data. Later \citet{Maemura2007} presented a modified version of BPMF called \textbf{BPMR}\footnote{Best Pair Merge with Reconstruction} which employs the same approach but with a little different reconstruction routine. BPMR runs in \begin{math}O(mn^3)\end{math} time and according to Maemura et al.'s experiments outperforms BPMF. \citet{Byrka2008} designed a modified version of BPMF to achieve an approximation ratio of 3. They also investigated how \emph{MinRTI}\footnote{Minimum Rooted Triplet Inconsistency} can be used to approximate MaxRTC and proved that MaxRTC admits a polynomial-time \begin{math}(3 - \frac{2}{n-2})-\end{math}approximation.
\section{Algorithms and experimental results}
\label{sec:4}

In this section we present two new heuristic algorithms for the MaxRTC problem.
\subsection{FastTree}
\label{sec:4-1}

The first heuristic algorithm has a bottom up greedy approach which is faster than the other previously known algorithms employing a simple data structure.

Let \emph{R(T)} denote the set of all triplets consistent with a given tree \emph{T}. \emph{R(T)} is called the \emph{reflective triplet set} of \emph{T}. It forms a minimal dense triplet set and represents \emph{T} uniquely\cite{Jansson2006}. Now we define the \emph{closeness} of the pair \emph{\{i,j\}}. The closeness of the pair \emph{\{i,j\}}, \begin{math}C_{i,j}\end{math}, is defined as the number of triplets of the form \emph{ij\textbar k} in a triplet set. Clearly, for any arbitrary tree \emph{T}, closeness of cherry species equals \begin{math}n-2\end{math} which is maximum in \emph{R(T)}. The reason is that every cherry species has a triplet with every other specie. Now suppose we contract every cherry species of the form \{\emph{i,j}\} to their parents \begin{math}p_{ij}\end{math} and then update \emph{R(T)} as following. For each contracted cherry species \{\emph{i,j}\} we remove triplets of the form \emph{ij}\textbar\emph{k} from \emph{R(T)} and replace \emph{i} and \emph{j} with \begin{math}p_{ij}\end{math} within the remaining triplets. The updated set, \begin{math}R^\prime(T^\prime)\end{math}, would be the reflective triplet set for the new tree \begin{math}T^\prime\end{math}. Observe that for cherries of the form \begin{math}\{p_{ij}, k\}\end{math} in \begin{math}T^\prime\end{math}, \begin{math}C_{i,k}\end{math} and \begin{math}C_{j,k}\end{math} would equal n-3 in \emph{R(T)}. Similarly, for cherries of the form \begin{math}\{p_{ij}, p_{kl}\}\end{math} in \begin{math}T^\prime\end{math}, \begin{math}C_{i,k}\end{math}, \begin{math}C_{j,k}\end{math}, \begin{math}C_{i,l}\end{math} and \begin{math}C_{j,l}\end{math} would equal n-4 in \emph{R(T)}. This forms the main idea of the first heuristic algorithm. We first compute the closeness of pairs of species by visiting triplets. Furthermore, sorting the pairs according to their closeness gives us the reconstruction order of the tree. This routine outputs the unique tree \emph{T} for any given reflective triplet set \emph{R(T)}. Yet, we have to consider that the input triplet set is not always a reflective triplet set. Consequently, the reconstruction order produced by sorting may not be the right order. However, if the loss of triplets admits a uniform distribution it won't affect the reconstruction order. An approximate solution for this problem is refining the closeness. This can be done by reducing the closeness of the pairs \{\emph{i,k}\} and \{\emph{j,k}\} for any visited triplet of the form \emph{ij}\textbar\emph{k}. Thus, if the pair \{\emph{i,j}\} were actually cherries, then the probability of choosing the pairs \{\emph{i,k}\} or \{\emph{j,k}\} before choosing the pair \{\emph{i,j}\} due to triplet loss will be reduced. We call this algorithm FastTree. See Alg. ~\ref{alg:FastTree} for the whole algorithm.
\begin{algorithm}
\caption{FastTree}
\label{alg:FastTree}
\begin{algorithmic}[1]
\STATE{Initialize a forest \emph{F} consisting of \emph{n} one-node trees labeled by species.}
\FORALL{triplet of the form \emph{ij}\textbar\emph{k}}
\STATE {\begin{math}C_{i,j}\end{math}:=\begin{math}C_{i,j}\end{math}+1}
\STATE {\begin{math}C_{i,k}\end{math}:=\begin{math}C_{i,k}\end{math}-1}
\STATE {\begin{math}C_{j,k}\end{math}:=\begin{math}C_{j,k}\end{math}-1}
\ENDFOR
\STATE{Create a list \emph{L} of pairs of species.}
\STATE{Sort L according to the refined closeness of pairs with a linear time sorting algorithm.}
\WHILE{\textbar\emph{L}\textbar\begin{math}>\end{math}0}
\STATE{Remove the pair \{\emph{i,j}\} with maximum \begin{math}C_{i,j}\end{math}.}
\IF{\emph{i} and \emph{j} are not in the same tree}
\STATE{Add a new node and connect it to roots of trees containing \emph{i} and \emph{j}.}
\ENDIF
\ENDWHILE
\IF{\emph{F} has more than one tree}
\STATE{Merge trees in any order until there would be only one tree.}
\ENDIF
\RETURN{the tree in \emph{F}}
\end{algorithmic}
\end{algorithm}
\begin{theorem}
FastTree runs in \begin{math}O(m + \alpha(n)n^2)\end{math} time.
\end{theorem}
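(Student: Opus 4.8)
\noindent\emph{Proof strategy.} The plan is to bound the cost of Algorithm~\ref{alg:FastTree} line by line; the only two places that are not pure bookkeeping are the sort in line~8 and the maintenance of the forest in lines~9--14, which I would handle, respectively, with a counting (radix) sort and with a disjoint-set (union--find) data structure.

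First I would dispose of lines~1--8. Setting up the forest and the $n\times n$ array holding the values $C_{i,j}$ takes $O(n^2)$ time, the loop in lines~2--6 does $O(1)$ work per triplet and hence $O(m)$ in total, and building the list $L$ of the $\binom{n}{2}$ pairs is again $O(n^2)$. For line~8 the key observation is that the entry $C_{i,j}$ is modified only by triplets of one of the three forms $ij|k$, $ik|j$, $jk|i$ with $k\notin\{i,j\}$, each changing it by exactly $\pm 1$; since $R$ is a set there are at most $3(n-2)$ such triplets, so every refined closeness value is an integer in the interval $[-2(n-2),\,n-2]$, whose length is $O(n)$. Thus $L$ consists of $O(n^2)$ items whose sort keys lie in a universe of size $O(n)$, and counting sort orders it in $O(n^2)$ time. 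It is precisely this $O(n)$ range bound, rather than the trivial estimate $|C_{i,j}|\le m$ (which can be as large as $\Theta(n^3)$), that certifies the sort is genuinely linear.

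Next I would analyze lines~9--18. I maintain a union--find structure over the $n$ species with union by rank and path compression, augmenting every set with a pointer to the root of the corresponding subtree of $F$. Each pass of the while loop removes the current maximum-closeness pair from the head of the sorted list in $O(1)$ time, performs two \textsc{Find} operations to obtain the representatives of $i$ and $j$, and, when these differ, creates one new node, attaches the two stored subtree roots as its children, performs one \textsc{Union}, and records the new node as the root of the merged set; apart from the \textsc{Find}/\textsc{Union} calls this is $O(1)$ work. Since the loop runs exactly $|L|=\binom{n}{2}=\Theta(n^2)$ times, it issues $\Theta(n^2)$ disjoint-set operations on a ground set of $n$ elements, which by the classical analysis of union by rank with path compression costs $O(\alpha(n)\,n^2)$ in total. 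Lines~15--17 merge the at most $n$ surviving trees with fewer than $n$ further \textsc{Union}s, i.e.\ $O(\alpha(n)\,n)$, and returning the tree is $O(n)$. Summing, $O(n^2)+O(m)+O(n^2)+O(\alpha(n)n^2)+O(\alpha(n)n)+O(n)=O(m+\alpha(n)n^2)$.

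The substantive points, and the only ones I would write out in full, are the two flagged above: the claim that all closeness values lie in an integer range of length $O(n)$, which is what makes line~8 run in linear time, and the verification that decorating the union--find structure with subtree-root pointers does not spoil the amortized $O(\alpha(n))$ cost per operation — it does not, since that pointer is kept only at the set representative and is rewritten only during a \textsc{Union}, never during a path compression. Everything else is routine accounting, so I expect the $O(n)$ range bound for the closeness values to be the one step that needs the most care.
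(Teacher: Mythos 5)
Your proof is correct and follows essentially the same route as the paper: $O(m)$ for the counting loop, a linear-time integer sort for line~8, and amortized $O(\alpha(n))$ union--find operations for the $\Theta(n^2)$ iterations of the while loop. One small point in your favor: the paper asserts the closeness lies in $[0,\,n-2]$, which is only true for the unrefined closeness, whereas your bound $[-2(n-2),\,n-2]$ correctly accounts for the decrements in lines~4--5 and is what actually justifies the linear-time sort of the \emph{refined} values.
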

\begin{proof}
Initializing a forest in step 1 takes \begin{math}O(n)\end{math} time. Steps 2-6 take \begin{math}O(m)\end{math} time. We know that the closeness is an integer value between \begin{math}0\end{math} and \begin{math}n-2\end{math}. Thus, we can employ a linear time sorting algorithm \cite{Cormen1990}. There are \begin{math}O(n^2)\end{math} possible pairs, therefore, step 8 takes \begin{math}O(n^2)\end{math} time. Similarly, the while loop in step 9 takes \begin{math}O(n^2)\end{math} time. Each removal in step 10 can be done in \begin{math}O(1)\end{math} time. By employing optimal data structures which are used for disjoint-set unions\cite{Cormen1990}, the amortized time complexity of steps 11 and 12 will be \begin{math}O(\alpha(n))\end{math}, where \begin{math}\alpha(n)\end{math} is the inverse of the function \begin{math}f(x)=A(n,n)\end{math}, and \emph{A} is the well known fast-growing \emph{Ackermann} function. Furthermore, step 16 takes \begin{math}O(n\alpha(n))\end{math} time. Hence, the running time of FastTree would be \begin{math}O(m + \alpha(n)n^2)\end{math}.
\end{proof}
\begin{figure}
  \begin{center}
    \includegraphics[scale=0.5]{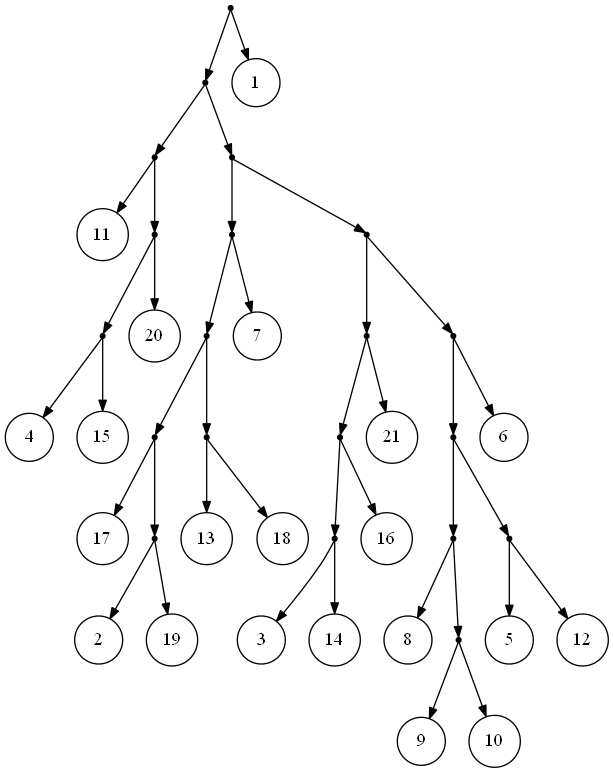}
  \end{center}
  \caption{Output of FastTree for a dense triplet set of yeast \emph{Cryptococcus gattii} data}
  \label{fig:2}
\end{figure}
Since \begin{math}A(4,4)=2^{2^{2^{65536}}}\end{math}, \begin{math}\alpha(n)\end{math} is less than 4 for any practical input size \emph{n}. In comparison to the fast version of Aho et al.'s algorithm FastTree employs a simpler data structure and in comparison to Aho et al.'s original algorithm it has smaller time complexity. Yet, the most important advantage of FastTree to Aho et al.'s algorithm is that it won't stuck if there is not a consistent tree with the input triplets, and it will output a proper tree in such a way that the clusters are very similar to that of the real network. The tree in Fig.~\ref{fig:2} is the output of FastTree on a dense set of triplets based on yeast \emph{Cryptococcus gattii} data. There is no consistent tree with the whole triplet set, however, \citet{vanIersel2008} presented a level-2 network consistent with the set(see Fig.~\ref{fig:3}). This set is available online \cite{triplets}. In comparison to BPMR and BPMF, FastTree runs much faster for large set of triplets and species. However, for highly sparse triplet sets, the output of FastTree may satisfy considerably less triplets than the tree constructed by BPMF or BPMR.
\begin{figure}
  \begin{center}
    \includegraphics[scale=0.75]{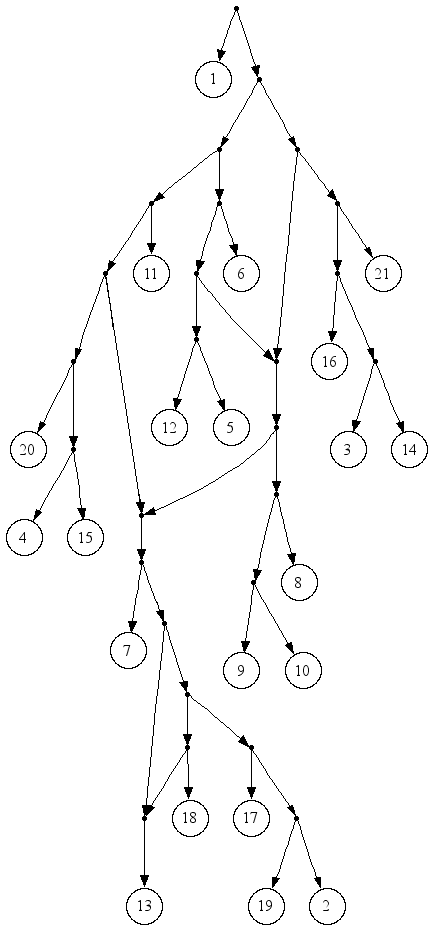}
  \end{center}
  \caption{A Level-2 network for dense triplet set of yeast \emph{Cryptococcus gattii} data}
  \label{fig:3}
\end{figure}

\subsection{BPMTR}
\label{sec:4-2}

Before explaining the second heuristic algorithm we need to survey BPMF \cite{Wu2004} and BPMR \cite{Maemura2007}.
BPMF utilizes a bottom up approach similar to hierarchal clustering. Initially, there are \emph{n} trees each of which contains a single node representing one of \emph{n} given species. In each iteration, the algorithm computes a function called \begin{math}e\_score \end{math} for each combination of two trees. Furthermore, two trees with the maximum \begin{math}e\_score\end{math} are merged into a single tree by adding a new node as the common parent of the selected trees. \citet{Wu2004} introduced six alternatives for computing the \begin{math}e\_score\end{math} using combinations of \emph{w}, \emph{p} and \emph{t}. (see Tab.~\ref{tab:1}). Though, in each run one of the six alternatives must be used. In the function \begin{math} e\_score(C_1,C_2)\end{math}, \emph{w} is the number of triplets satisfied by merging \begin{math}C_1\end{math} and \begin{math}C_2\end{math} which is the number of triplets of the form \emph{ij}\textbar\emph{k} in which \emph{i} is in \begin{math}C_1\end{math}, \emph{j} is in \begin{math}C_2 \end{math} and \emph{k} is neither in \begin{math}C_1\end{math} nor in \begin{math}C_2\end{math}. The value of \emph{p} is the number of triplets that is in conflict with merging \begin{math}C_1 \end{math} and \begin{math} C_2\end{math}. It is the number of triplets of the form \emph{ij}\textbar\emph{k} in which \emph{i} is in \begin{math} C_1\end{math}, \emph{k} is in \begin{math} C_2 \end{math} and \emph{j} is neither in \begin{math}C_1\end{math} nor in \begin{math}C_2\end{math}. The value of \emph{t} is the total number of triplets of the form \emph{ij}\textbar\emph{k} in which \emph{i} is in \begin{math}C_1 \end{math}and \emph{j} is \begin{math}C_2\end{math}. Wu compared the BPMF with \texttt{One-Leaf-Split} and \texttt{Min-Cut-Split} and showed that BPMF works better on randomly generated triplet sets. He also notifies that none of six alternatives of \begin{math}e\_score\end{math} is absolutely better than the other.\\
\begin{table}
\caption{The six alternatives of \emph{e\_score}}
\label{tab:1}       
\begin{tabular}{lllll}
If-Penalty
& & & Ratio Type\\
\hline\noalign{\smallskip}
False &  & w & w/(w+p) & w/t \\
True &  & w-p & (w-p)/(w+p) & (w-p)/t \\
\noalign{\smallskip}\hline\noalign{\smallskip}
\end{tabular}
\end{table}
\citet{Maemura2007} introduced a modified version of BPMF called BPMR outperforming the results of BPMF. BPMR works very similarly in comparison to BPMF except for a reconstruction step which is used in BPMR. Suppose \begin{math}T_x\end{math} and \begin{math}T_y\end{math} are two trees having the maximum \begin{math}e\_score \end{math} at some iteration and are selected to merge into a new tree. By merging \begin{math}T_x\end{math} and \begin{math}T_y\end{math} some triplets will be satisfied, but some other triplets will be in conflict. Without loss of generality, suppose \begin{math}T_x\end{math} has two subtrees namely left subtree and right subtree. Besides, suppose a triplet \emph{ij}\textbar\emph{k} in which \emph{i} is in the left subtree of \begin{math}T_x\end{math}, \emph{k} is in the right subtree of \begin{math}T_x\end{math} and \emph{j} is in \begin{math}T_y\end{math}. Observe that by merging \begin{math}T_x\end{math} and \begin{math}T_y\end{math} the mentioned triplet becomes inconsistent. However, swapping \begin{math}T_y\end{math} with the right subtree of the \begin{math}T_x\end{math} satisfies this triplet while some other triplets become inconsistent. It is possible that the resulting tree of this swap satisfy more triplets than the primary tree. This is the main idea behind the BPMR. In BPMR, in addition to the regular merging of \begin{math}T_x\end{math} and \begin{math}T_y\end{math}, \begin{math}T_y\end{math} is swapped with the left and the right subtree of \begin{math}T_x\end{math} and also \begin{math}T_x\end{math} is swapped with the left and the right tree of \begin{math}T_y\end{math}. Finally, among these five topologies we choose the one that satisfies more triplets.

Suppose the left subtree of the \begin{math}T_x\end{math} has also two subtrees. Swapping \begin{math}T_y\end{math} with one of these subtrees would probably satisfy new triplets while some old ones would become inconsistent. There are examples in which this swap results in a tree that satisfies more triplets. This forms our second heuristic idea that swapping of \begin{math}T_y\end{math} with every subtree of \begin{math}T_x\end{math} should be checked. \begin{math}T_x\end{math} should also be swapped with every subtree of \begin{math}T_y\end{math}. At every iteration of BPMF after choosing two trees maximizing the \begin{math}e\_score\end{math}, the algorithm tests every possible swapping of these two trees with subtrees of each other and then chooses the tree having the maximum consistency with triplets. We call this algorithm \emph{BPMTR}\footnote{Best Pair Merge with Total Reconstruction}. See Alg. ~\ref{alg:BPMTR} for details of the BPMTR.
\begin{algorithm}
\caption{BPMTR}
\label{alg:BPMTR}
\begin{algorithmic}[1]
\STATE{Initialize a set \emph{T} consisting of \emph{n} one-node trees labeled by species.}
\WHILE{\textbar\emph{T}\textbar\begin{math}>\end{math}1}
\STATE{Find and remove two trees $T_x$, $T_y$ with maximum $e\_score$.}
\STATE{Create a new tree $T_{merge}$ by adding a common parent to $T_x$ and $T_y$}
\STATE{$T_{best}$ := $T_{merge}$}
\FORALL{subtree $T_{sub}$ of $T_x$}
\STATE{Let $T_{swapped}$ be the tree constructed by swapping $T_{sub}$ with $T_y$}
\IF{the number of consistent triplets with $T_{swapped}$ was larger than the number of triplets consistent with $T_{best}$}
\STATE{$T_{best}$ := $T_{swapped}$}
\ENDIF
\ENDFOR
\FORALL{subtree $T_{sub}$ of $T_y$}
\STATE{Let $T_{swapped}$ be the tree constructed by swapping $T_{sub}$ with $T_x$}
\IF{the number of consistent triplets with $T_{swapped}$ was larger than the number of triplets consistent with $T_{best}$}
\STATE{$T_{best}$ := $T_{swapped}$}
\ENDIF
\ENDFOR
\STATE{Add $T_{best}$ to \emph{T}.}
\ENDWHILE
\RETURN{the tree in \emph{T}}
\end{algorithmic}
\end{algorithm}
\begin{theorem}
BPMTR runs in \begin{math} O(mn^3)\end{math} time.
\end{theorem}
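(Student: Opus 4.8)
The plan is to bound the running time of the outer \texttt{while} loop (steps 2--19) and show each iteration costs $O(mn^2)$, since the loop executes $n-1$ times (each iteration reduces $|T|$ by one, starting from $n$ trees). First I would handle the initialization in step~1, which clearly takes $O(n)$ time. Then I would analyze a single iteration of the loop.

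The key steps, in order, are the following. (1) In step~3, computing $e\_score$ for every pair of trees in the current forest: there are $O(n^2)$ pairs, and for each pair the quantities $w$, $p$, $t$ can be obtained by scanning all $m$ triplets once (charging each triplet to the pair(s) of trees its leaves fall into), so a naive bound is $O(mn^2)$ per iteration; more carefully one can compute all scores in $O(mn)$ or $O(mn^2)$ time, but $O(mn^2)$ suffices for the claimed bound. (2) In steps~4--5, building $T_{merge}$ and counting how many triplets are consistent with it: verifying consistency of one triplet with a tree on at most $n$ leaves takes $O(1)$ after computing lowest common ancestors, or $O(n)$ naively, so counting over all $m$ triplets costs $O(mn)$, which is absorbed. (3) The two \texttt{for}-loops, steps~6--11 and 12--17: the number of subtrees of $T_x$ (resp.\ $T_y$) is at most the number of its nodes, which is $O(n)$; for each such subtree we construct $T_{swapped}$ in $O(n)$ time and then count the triplets consistent with $T_{swapped}$, again $O(mn)$ (or $O(m)$ with an LCA data structure). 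Hence each \texttt{for}-loop costs $O(n)\cdot O(mn)=O(mn^2)$ per iteration. (4) Adding $T_{best}$ back in step~18 is $O(1)$ amortized or $O(n)$. Summing, one iteration costs $O(mn^2)$, and multiplying by the $O(n)$ iterations gives $O(mn^3)$ overall; the final return in step~20 is free.

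The main obstacle is step~(3): one must argue that re-counting the number of consistent triplets from scratch for each of the $O(n)$ candidate swapped topologies, within a single iteration, does not blow up the bound. The point is that each consistency check is $O(m)$ work (with a linear-time LCA preprocessing of the tree, which itself is $O(n)$ and thus negligible) and there are only $O(n)$ candidate topologies per iteration, so the per-iteration cost is $O(mn)$ for the counting plus $O(n^2)$ for tree manipulations, dominated by the $O(mn^2)$ from step~3. One should also note that ``every subtree of $T_x$'' means every rooted subtree hanging below some node, of which there are $O(n)$, not $2^{O(n)}$ — this is the subtlety that keeps BPMTR polynomial, in contrast to an exponential search over all re-rootings or re-arrangements. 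Once these observations are in place, the arithmetic $O(n)\cdot O(mn^2)=O(mn^3)$ completes the proof.
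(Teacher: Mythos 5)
Your proposal is correct and follows essentially the same decomposition as the paper's own proof: $O(n)$ iterations of the while loop, each dominated by the $O(mn^2)$ cost of computing $e\_score$ over all $O(n^2)$ pairs against $m$ triplets, plus $O(n)$ candidate swapped topologies per iteration, each checked against all $m$ triplets in $O(m)$ time after linear-time LCA preprocessing. The paper's argument has the same structure and arrives at the same bound, so no further comparison is needed.
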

\begin{proof}
Step 1 takes \begin{math}O(n)\end{math} time. In steps 2, initially \emph{T} contains \emph{n} clusters, but in each iteration two clusters merge into a cluster. Hence, the while loop in step 2 takes \begin{math}O(n)\end{math} time. In Step 3, \begin{math}e\_score\end{math} is computed for every subset of \emph{T} of size two. By applying Bender and Farach-Colton's preprocessing algorithm \cite{Bender2000} which runs in \begin{math}O(n)\end{math} time for a tree with \emph{n} nodes, every LCA query can be answered in \begin{math}O(1)\end{math} time. Therefore, the consistency of a triplet with a cluster can be checked in \begin{math}O(1)\end{math} time. Since there are \emph{m} triplets, step 3 takes \begin{math}{|T| \choose 2}O(m)\end{math} time. In steps 5, 9 and 15 \begin{math}T_{best}\end{math} is a pointer that stores the best topology found so far during each iteration of the while loop in \begin{math}O(1)\end{math} time. The complexity analysis of foreach loops in steps 6-11 and 12-17 are similar, and it is enough to consider one. Every rooted binary tree with \emph{n} leaves has \begin{math}O(n)\end{math} internal nodes so the total number of swaps in step 7 for any two clusters will be at most \begin{math}O(n-|T|)\end{math}. In step 8 computing the number of consistent triplets with \begin{math}T_{swapped}\end{math} takes no more than \begin{math}O(m)\end{math} time. Steps 4, 7 and 18 are implementable in \begin{math}O(1)\end{math} time. Accordingly, the running time of steps 2-19 would be:
\\
\begin{equation}\label{equ:1}
    \displaystyle\sum_{|T|=2}^n \Big[ m{|T| \choose 2} + O(n-|T|) + m) \Big]= O(mn^3)
\end{equation}
\\
Step 20 takes \begin{math}O(1)\end{math} time. Hence, the time complexity of BPMTR is \begin{math}O(mn^3)\end{math}.
\end{proof}
We tested BPMTR over randomly generated triplet sets with \emph{n} = 15, 20 species and \emph{m} = 500, 1000 triplets. We experimented hundred times for each combination of \emph{n} and \emph{m}. The results in Tab.~\ref{tab:2} indicate that BPMTR outperforms BPMR. However, in more than hundreds of tests there were few examples that BPMR performed better than BPMTR. For \emph{n}=30 and \emph{m}=1000, in sixty two triplet sets out of hundred randomly generated triplet sets, BPMTR satisfied more triplets. In thirty four triplet sets, BPMR and BPMTR had the same results and in only four triplet sets BPMR satisfied more triplets.

%
\begin{table}
\caption{Outperforming results of BPMTR in comparison to BPMR}
\label{tab:2}       
\begin{tabular}{lllll}
\\
\hline\noalign{\smallskip}
No. of species and triplets
& & \% better results & \% worse results\\
\hline\noalign{\smallskip}
n=20, m=500 & &  \%29 & \%0.0\\
n=20, m=1000 & & \%37 & \%1\\
n=30, m=500 & & \%61 & \%3\\
n=30, m=1000 & & \%62 & \%4\\
\noalign{\smallskip}\hline\noalign{\smallskip}
\end{tabular}
\end{table}

\section{Conclusion and Open Problems}
\label{sec:5}

In this paper we presented two new algorithms for the so called MaxRTC problem. For a given set of \emph{m} triplets on \emph{n} species, the FastTree algorithm runs in \begin{math}O(m + \alpha(n)n^2)\end{math} time which is faster than any other previously known algorithm, although, the outcome can be less satisfactory for highly spars triplet sets. The BPMTR algorithm runs in \begin{math}O(mn^3)\end{math} time and in average performs better than any other previously known approximation algorithm for this problem. There are still more ideas for improvement of the described algorithms.

1. In the FastTree algorithm to compute the closeness of pairs of species we check triplets, and for each triplet of the form \emph{ij}\textbar\emph{k} we add a weight \emph{w} to \begin{math}C_{i,j}\end{math} and subtract a penalty \emph{p} from \begin{math}C_{i,k}\end{math} and \begin{math}C_{j,k}\end{math}. In this paper, we set \begin{math}w=p=1\end{math}. If one assigns different values for \emph{w} and \emph{p} the closeness of pairs of species will be changed and the reconstruction order will be affected. It is interesting to check for which values of \emph{w} and \emph{p} FastTree performs better.

2. \citet{Wu2004} introduced six alternatives for \begin{math}e\_score\end{math} each of which performs better for different input triplet sets. It is interesting to find a new function outperforming all the alternatives for any input triplet set.

3. The best known approximation factor for the MaxRTC problem is 3 \cite{Byrka2008}. This is the approximation ratio of BPMF. Since MaxRTC is APX-hard a PTAS is unattainable, unless P=NP. However, \cite{Byrka2010} suggest that an approximation ratio in the region of 1.2 might be possible. Finding an \begin{math}\alpha-\end{math}approximation algorithm for MaxRTC with \begin{math}\alpha<3\end{math} is still open.

4. It is also interesting to find the approximation ratio of FastTree in general and for reflective triplet sets.
\section*{Acknowledgements}

The authors are grateful to thank Jesper Jansson and Fatemeh Zareh for reviewing this article, provding useful comments and answering our endless questions.



















\bibliographystyle{mdpi}
\makeatletter
\renewcommand\@biblabel[1]{#1. }
\makeatother
\bibliography{myrefs}

\begin{thebibliography}{-------}
\providecommand{\natexlab}[1]{#1}

\bibitem[Bouckaert \em{et~al.}(2012)Bouckaert, Lemey, Dunn, Greenhill,
  Alekseyenko, Drummond, Gray, Suchard, and Atkinson]{Bouckaert2012}
Bouckaert, R.; Lemey, P.; Dunn, M.; Greenhill, S.J.; Alekseyenko, A.V.;
  Drummond, A.J.; Gray, R.D.; Suchard, M.A.; Atkinson, Q.D.
\newblock Mapping the Origins and Expansion of the Indo-European Language
  Family.
\newblock {\em Science} {\bf 2012}, {\em 337},~957--960.

\bibitem[{Felsenstein}(2004)]{Felsenstein2004}
{Felsenstein}, J.
\newblock {\em Inferring Phylogenies}; Sinauer Associates: Sunderland, MA, USA,
   2004.

\bibitem[Chor \em{et~al.}(2001)Chor, Hendy, and Penny]{Chor2001}
Chor, B.; Hendy, M.; Penny, D.
\newblock Analytic Solutions for Three-Taxon \begin{math}{ML_{MC}}\end{math}
  Trees with Variable Rates Across Sites. In {\em Algorithms in
  Bioinformatics}; Gascuel, O.; Moret, B., Eds.; Springer Berlin / Heidelberg,
  2001; Vol. 2149, {\em Lecture Notes in Computer Science}, pp. 204--213.

\bibitem[Kannan \em{et~al.}(1996)Kannan, Lawler, and Warnow]{Kannan1996}
Kannan, S.K.; Lawler, E.L.; Warnow, T.J.
\newblock Determining the Evolutionary Tree Using Experiments.
\newblock {\em Journal of Algorithms} {\bf 1996}, {\em 21},~26 -- 50.

\bibitem[Byrka \em{et~al.}(2010)Byrka, Gawrychowski, Huber, and
  Kelk]{Byrka2010}
Byrka, J.; Gawrychowski, P.; Huber, K.T.; Kelk, S.
\newblock Worst-case optimal approximation algorithms for maximizing triplet
  consistency within phylogenetic networks.
\newblock {\em Journal of Discrete Algorithms} {\bf 2010}, {\em 8},~65 -- 75.

\bibitem[Jansson(2001)]{Jansson2001}
Jansson, J.
\newblock On the Complexity of Inferring Rooted Evolutionary Trees.
\newblock {\em Electronic Notes in Discrete Mathematics} {\bf 2001}, {\em
  7},~50 -- 53.

\bibitem[Aho \em{et~al.}(1981)Aho, Sagiv, Szymanski, and Ullman]{Aho1981}
Aho, A.V.; Sagiv, Y.; Szymanski, T.G.; Ullman, J.D.
\newblock Inferring a Tree from Lowest Common Ancestors with an Application to
  the Optimization of Relational Expressions.
\newblock {\em SIAM J. Comput.} {\bf 1981}, {\em 10},~405--421.

\bibitem[Henzinger \em{et~al.}(1999)Henzinger, King, and Warnow]{Henzinger1999}
Henzinger, M.R.; King, V.; Warnow, T.
\newblock Constructing a Tree from Homeomorphic Subtrees, with Applications to
  Computational Evolutionary Biology.
\newblock {\em Algorithmica} {\bf 1999}, {\em 24},~1--13.

\bibitem[Jansson \em{et~al.}(2005)Jansson, Ng, Sadakane, and Sung]{Jansson2005}
Jansson, J.; Ng, J.H.K.; Sadakane, K.; Sung, W.K.
\newblock Rooted Maximum Agreement Supertrees.
\newblock {\em Algorithmica} {\bf 2005}, {\em 43},~293--307.

\bibitem[Holm \em{et~al.}(2001)Holm, de~Lichtenberg, and Thorup]{Holm2001}
Holm, J.; de~Lichtenberg, K.; Thorup, M.
\newblock Poly-logarithmic deterministic fully-dynamic algorithms for
  connectivity, minimum spanning tree, 2-edge, and biconnectivity.
\newblock {\em J. ACM} {\bf 2001}, {\em 48},~723--760.

\bibitem[Wu(2004)]{Wu2004}
Wu, B.Y.
\newblock Constructing the Maximum Consensus Tree from Rooted Triples.
\newblock {\em Journal of Combinatorial Optimization} {\bf 2004}, {\em
  8},~29--39.

\bibitem[{Bryant}(1997)]{Bryant1997}
{Bryant}, D.
\newblock Building Trees, Hunting for Trees, and Comparing Trees - Theory and
  Methods in Phylogenetic Analysis.
\newblock PhD thesis, University of Canterbury,  1997.

\bibitem[Byrka \em{et~al.}(2008)Byrka, Guillemot, and Jansson]{Byrka2008}
Byrka, J.; Guillemot, S.; Jansson, J.
\newblock New Results on Optimizing Rooted Triplets Consistency. In {\em
  Algorithms and Computation}; Hong, S.H.; Nagamochi, H.; Fukunaga, T., Eds.;
  Springer Berlin / Heidelberg,  2008; Vol. 5369, {\em Lecture Notes in
  Computer Science}, pp. 484--495.

\bibitem[{Van Iersel} \em{et~al.}(2009){Van Iersel}, {Kelk}, and
  {Mnich}]{vanIersel2007}
{Van Iersel}, L.; {Kelk}, S.; {Mnich}, M.
\newblock Uniqueness, intractability and exact algorithms: reflections on
  level-k phylogenetic networks.
\newblock {\em Journal of Bioinformatics and Computational Biology (JBCB)} {\bf
  2009}, {\em 7},~597 -- 623.

\bibitem[Gasieniec \em{et~al.}(1999)Gasieniec, Jansson, Lingas, and
  Ostlin]{Gasieniec1999}
Gasieniec, L.; Jansson, J.; Lingas, A.; Ostlin, A.
\newblock On the Complexity of Constructing Evolutionary Trees.
\newblock {\em Journal of Combinatorial Optimization} {\bf 1999}, {\em
  3},~183--197.

\bibitem[{Maemura} \em{et~al.}(2007){Maemura}, {Jansson}, {Sadakane}, and
  {Yamashita}]{Maemura2007}
{Maemura}, K.; {Jansson}, J.~{Ono}, H.; {Sadakane}, K.; {Yamashita}, M.
\newblock Approximation algorithms for constructing evolutionary trees from
  rooted triplets.
\newblock 10th Korea-Japan joint workshop on algorithms and computation,  2007.

\bibitem[Jansson and Sung(2006)]{Jansson2006}
Jansson, J.; Sung, W.K.
\newblock Inferring a level-1 phylogenetic network from a dense set of rooted
  triplets.
\newblock {\em Theoretical Computer Science} {\bf 2006}, {\em 363},~60 -- 68.

\bibitem[Cormen \em{et~al.}(1990)Cormen, Leiserson, and Rivest]{Cormen1990}
Cormen, T.T.; Leiserson, C.E.; Rivest, R.L.
\newblock {\em Introduction to algorithms}; MIT Press: Cambridge, MA, USA,
  1990.

\bibitem[Van~Iersel \em{et~al.}(2008)Van~Iersel, Keijsper, Kelk, Stougie,
  Hagen, and Boekhout]{vanIersel2008}
Van~Iersel, L.; Keijsper, J.; Kelk, S.; Stougie, L.; Hagen, F.; Boekhout, T.
\newblock Constructing Level-2 Phylogenetic Networks from Triplets. In {\em
  Research in Computational Molecular Biology}; Vingron, M.; Wong, L., Eds.;
  Springer Berlin / Heidelberg,  2008; Vol. 4955, {\em Lecture Notes in
  Computer Science}, pp. 450--462.

\bibitem[Kelk(2008)]{triplets}
Kelk, S.
\newblock LEVEL2: A fast algorithm for constructing level-2 phylogenetic
  networks from dense sets of rooted triplets,  2008.

\bibitem[Bender and Farach-Colton(2000)]{Bender2000}
Bender, M.A.; Farach-Colton, M.
\newblock The LCA Problem Revisited. In {\em LATIN 2000: Theoretical
  Informatics}; Gonnet, G.; Viola, A., Eds.; Springer Berlin / Heidelberg,
  2000; Vol. 1776, {\em Lecture Notes in Computer Science}, pp. 88--94.

\end{thebibliography}





\end{document}